\documentclass{article}

\usepackage[utf8]{inputenc}
\usepackage[usenames,dvipsnames]{xcolor} 
\usepackage[flushleft]{threeparttable}
\usepackage[us,12hr]{datetime}
\usepackage[titletoc,title]{appendix}

\usepackage{graphicx} 
\usepackage{natbib}
\usepackage{amsmath}
\usepackage{amssymb}
\usepackage{amsthm} 
\usepackage{dsfont}
\usepackage{tablefootnote}
\usepackage{fancyhdr}
\usepackage{multirow, xltabular}
\usepackage{multicol} 
\usepackage{lipsum}
\usepackage{float}
\usepackage{comment}
\usepackage{subfig}
\usepackage{epsfig}
\usepackage{epstopdf} 
\usepackage{fancyvrb}
\usepackage{tabularx}
\usepackage{caption} 
\usepackage{hyperref}

\newtheorem{theorem}{Theorem}

\begin{document}

\title{Statistical Properties and Power Analysis of Divergence Measures for Credit Risk Model Monitoring}

\author{
  Abdullah Karasan\textsuperscript{1,*} \quad Alper Hekimoğlu\textsuperscript{2} \\[10pt]
  \small\textsuperscript{1}University of Maryland, Department of Computer Science and\\
  \small Electrical Engineering, Baltimore, Maryland, USA \\
  \small\textsuperscript{2}European Investment Bank, Model Validation, Luxemburg, Luxemburg \\[6pt]
  \small\textsuperscript{*}Corresponding author: akarasan@umbc.edu
}
\date{}
\maketitle

\begin{abstract}
Divergence measures are essential tools for detecting distributional shifts in model monitoring, particularly crucial given the volatility of financial data. While the Population Stability Index is most widely used, Jensen–Shannon Divergence and Kullback–Leibler Divergence offer distinct advantages: Jensen–Shannon Divergence handles mixture models, addresses zero-binning problems, and is symmetric, while Kullback–Leibler Divergence excels in Bayesian model comparison.

This study extends the work of \cite{yurdakulNaranjo} with two primary contributions: First, we derive the statistical properties and chi-square benchmark values for Jensen–Shannon Divergence and Kullback–Leibler Divergence. Second, we demonstrate their applicability by detecting distributional changes in credit default probabilities from Merton, Merton with jump, and stochastic volatility with jump models.

Our results establish that Jensen–Shannon Divergence and Kullback–Leibler Divergence follow chi-square distributions and reveal important practical trade-offs. Jensen–Shannon Divergence exhibits superior Type I error control, maintaining rejection rates closest to 5\%, thereby minimizing false positives. However, this conservatism reduces statistical power at small samples (27\% vs. 32\% for Population Stability Index and Kullback–Leibler Divergence at n=m=200), requiring larger samples for reliable detection. This trade-off enables practitioners to select measures based on whether minimizing false alarms or maximizing detection sensitivity is prioritized.

\textbf{Keywords}: Distribution shift, Divergence, Credit Risk, Simulation, Machine Learning
\end{abstract}

\section{Introduction}
Modeling is a challenging and multi-dimensional process, which is considered incomplete without robust and efficient model monitoring. This challenge lies in determining when and how to apply this monitoring activity. This is where distribution shift becomes important, as its detection in the monitoring process is essential in ensuring the long-term applicability of the model. Therefore, this shift should be monitored periodically to allow for re-training of the model to adjust to the current distribution. A model without a proper monitoring mechanism may suffer in data drift, which can significantly degrade the performance of the deployed model and, in some cases, make model useless. In this sense, divergence measures serve as strong tools for detecting distribution shift problems, which are indispensable in information theory and have recently become primary tools in numerous fields.

In the literature, there are several divergence measures that are used to distinguish one probability distribution from another. The most popular and useful measures are the Population Stability Index (PSI), Kullback-Leibler Divergence (KLD), and Jensen-Shannon Divergence (JSD). However, we have found that the derivation of the distributions and benchmark values for KLD and JSD have not been studied, despite their high importance and frequent use.

Given the volatile nature of financial data, it is of quite importance to consider the distribution shift in unseen data. In finance, these divergence measures enhance model validation, calibration, and portfolio monitoring. Their role in credit scoring strengthens models against changing economic landscapes. As early warning indicators, these measures unveil emerging risks, signaling adaptability in risk management. Thus, use of divergence measures empowers credit risk models for agility, fairness, and resilience in risk assessment.

In this study, we obtain the statistical properties of JSD and KLD measures and show that these measures follow a $\chi^2$ distribution. We also determine their benchmark values, which can be used to assess the statistical significance level. In the empirical section, we present the behavior of PSI, KLD, and JSD measures, as well as the differences between the empirical distributions obtained from credit risk models. Additionally, we compare and evaluate the consistency of these divergence measures through statistical power analysis. In order to accomplish these tasks, we use well-known credit risk models, including Merton, Merton with jumps, and Stochastic Volatility with jumps. The underlying assets in this study are latent asset prices. We use these latent asset prices to estimate the probability of default, which are then used in statistical analyses to detect the divergence, if any. We simulate multi-dimensional stock prices, where the correlations between these simulated stocks greatly affected the probability of default, as well as the statistical analyses based on it. This study contributes to the literature in two main ways. Firstly, we derive the statistical properties and benchmark values for JSD and KLD measures and show these measures follow $\chi^2$ distribution. Secondly, we demonstrate the way it can be applicable to financial models. This allows us to assess which divergence measure performs best under certain conditions.

The remainder of this study is as follows. In the second part, the statistical analysis of divergence measures are studied. In the third part, the aforementioned structural credit risk models are presented and discussed. Fourth part is dedicated to the empirical application including distributional comparison of divergence measures as well as statistical power analysis. Part five concludes.

\section{Literature Review}
The robustness of credit risk models relies heavily on the stability of the underlying data distributions. In the context of model monitoring, divergence measures are standard tools used to quantify distributional shifts over time. The foundations of categorical data analysis, which often underpin the binning approaches used in these assessments, were comprehensively established. Building on these foundations, the Population Stability Index (PSI) has become widely utilized in the banking industry. Recently, \cite{yurdakulNaranjo} detailed the formal statistical properties of the PSI, providing a rigorous mathematical framework that allows risk practitioners to apply formal hypothesis testing rather than relying on heuristic thresholds.

While PSI is an industry standard, information-theoretic measures such as the Kullback-Leibler Divergence (KLD) and the Jensen-Shannon Divergence (JSD) offer significant mathematical advantages. JSD, as a symmetrized and smoothed version of KLD, addresses the inherent asymmetry of KLD. \cite{Endres2006} introduced a new metric for probability distributions, mathematically proving that the square root of JSD satisfies the triangle inequality and functions as a true distance metric. The mathematical properties of classical and quantum Jensen-Shannon divergence were further explored by \cite{briet2009properties}, demonstrating its versatility. More recently, \cite{Nielsen2020} generalized the Jensen-Shannon Divergence and explored its centroid computations, expanding its theoretical scope in information geometry.

In financial applications, structural credit risk models must often account for sudden, discontinuous market movements. \cite{tankov} provided a comprehensive framework for financial modeling with jump processes, detailing the mechanics of models such as the Merton model with jumps and Stochastic Volatility with Jumps (SVJ). As these complex structural models generate nuanced probability of default (PD) distributions, standardizing the detection of shifts in these distributions is critical.

Despite the extensive individual literature on probability metrics \citep{Endres2006, briet2009properties, Nielsen2020} and jump diffusion processes \citep{tankov}, there is a distinct gap in linking the two for practical risk management. Specifically, the exact asymptotic distributions of KLD and JSD for sample-based model validation have not been derived in the same rigorous manner as PSI \citep{yurdakulNaranjo}. This study bridges that gap by deriving the exact statistical properties of JSD and KLD, and subsequently applying them to detect distributional shifts in the advanced structural credit risk models outlined in the literature.

\section{Statistical Analysis of Divergence Measures}\label{sec:2}
In the context of credit risk literature, PSI is often utilized to calculate the divergence between two credit risk models. This divergence measure is also frequently used to detect shifts in the distribution of model features used in predicting default probability. Additionally, we consider two well-known measures, JSD and KLD, and propose using JSD in particular as an alternative stability measure for predicting the probability of default. JSD is a symmetrized and smoothed form of KLD in information theory \cite{briet2009properties}. Moreover, JSD is a metric, and it is the square of a metric as proven by Endres and Schindelin \cite{Endres2006}. This function has several striking features: it is defined everywhere, bounded, symmetric, and vanishes only when $p = q$ (where $p$ and $q$ are two distributions). Furthermore, JSD can be applied to densities with any support and is bounded by $\log(2)$ \cite{Nielsen2020}.
 
On the other hand, KLD is not a metric. Besides, KLD is asymmetric as the cross-entropy itself is asymmetric. More specifically, let $p_1$ and $p_2$ be two probability distributions and KLD is asymmetric:
\begin{equation}
D_{KL}(p_1||p_2)\neq D_{KL}(p_2||p_1)
\end{equation}
 
Thus, KLD is not a distance measure as it has a non-negative value indicating the extent to which two probability distributions diverge. Thanks to the absolute continuity constraint, this asymmetry property of KLD is not as bad as it first seems. KLD is, therefore, a measure on detecting the difference of two probability distributions.
 
Before moving on, it is worth defining the divergence measures we discussed before. Let $N$ be sample sizes for base and target population. Similarly, let $\hat{p}$ and $\hat{q}$ are relative frequencies of the base and target periods.
\begin{align}
  \mathrm{PSI} &= \sum_{i=1}^{B}(\hat{p}_i-\hat{q}_i)\bigl(\log\hat{p}_i-\log\hat{q}_i\bigr),\label{eq:PSI}\\
  \mathrm{KLD} &= \sum_{i=1}^{B}\hat{p}_i\bigl(\log\hat{p}_i-\log\hat{q}_i\bigr),\label{eq:KLD}\\
  \mathrm{JSD} &= \tfrac{1}{2}\sum_{i=1}^{B}
                 \Bigl[\hat{p}_i\bigl(\log\hat{p}_i-\log\hat{Q}_i\bigr)
                      +\hat{q}_i\bigl(\log\hat{q}_i-\log\hat{Q}_i\bigr)\Bigr].\label{eq:JSD}
\end{align}
where $\hat{Q}_{i}=\frac{1}{2}(\hat{p}_{i}+\hat{q}_{i})$.

\begin{theorem}[General quadratic approximation of JSD]\label{theorem:th1}
Let $\hat{p}_i = X_i/n$ and $\hat{q}_i = Y_i/m$, where
$(X_1,\dots,X_B)\sim\mathrm{Multinomial}(n;\,p_1,\dots,p_B)$ and
$(Y_1,\dots,Y_B)\sim\mathrm{Multinomial}(m;\,q_1,\dots,q_B)$. Then
\begin{equation}\label{eq:JSD_approx}
  \mathrm{JSD}
  \;\approx\;
  \frac{1}{2}\sum_{i=1}^{B}
  \left[
    \frac{(\hat{p}_i-\hat{q}_i)^{2}}{2p_i}
    -\frac{(\hat{p}_i+\hat{q}_i)^{2}}{4p_i}
    +p_i
  \right].
\end{equation}
\end{theorem}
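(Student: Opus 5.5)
The plan is a Taylor expansion of every logarithm in \eqref{eq:JSD} around the \emph{population} cell probability $p_i$, keeping terms up to second order in the fluctuations $\hat p_i-p_i$, $\hat q_i-p_i$, $\hat Q_i-p_i$. These fluctuations are $O_p(N^{-1/2})$ by the multinomial CLT, at least under the null $q_i=p_i$ relevant for the benchmark values. First rewrite the summand of \eqref{eq:JSD} as
\[
\hat p_i\log\hat p_i+\hat q_i\log\hat q_i-2\hat Q_i\log\hat Q_i ,
\]
using $\hat p_i+\hat q_i=2\hat Q_i$. This form makes it plain that the zeroth-order pieces $(\hat p_i+\hat q_i-2\hat Q_i)\log p_i$ cancel identically, cell by cell.

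Next I substitute $\log x\approx\log p_i+(x-p_i)/p_i$ for $x\in\{\hat p_i,\hat q_i,\hat Q_i\}$. Multiplying by the prefactors $\hat p_i,\hat q_i,\hat Q_i$ makes the remaining terms quadratic. After the cancellation the $i$-th summand becomes
\[
\frac{\hat p_i(\hat p_i-p_i)+\hat q_i(\hat q_i-p_i)-2\hat Q_i(\hat Q_i-p_i)}{p_i}.
\]
The linear pieces $-p_i(\hat p_i+\hat q_i-2\hat Q_i)$ again vanish, leaving $(\hat p_i^2+\hat q_i^2-2\hat Q_i^2)/p_i$. The first term in \eqref{eq:JSD_approx} then comes from the elementary identity $a^2+b^2-2\bigl(\tfrac{a+b}{2}\bigr)^2=\tfrac12(a-b)^2$, which gives $(\hat p_i-\hat q_i)^2/(2p_i)$.

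The remaining step reconciles this with the extra terms $-\frac{(\hat p_i+\hat q_i)^2}{4p_i}+p_i=-\hat Q_i^2/p_i+p_i$ in the statement. Here I use the constraints $\sum_i\hat p_i=\sum_i\hat q_i=\sum_i p_i=1$, which force $\sum_i\hat Q_i=1$. With these, $\sum_i(p_i-\hat Q_i^2/p_i)=-\sum_i(\hat Q_i-p_i)^2/p_i$. This term is exactly what the Pearson-type correction for expanding the mixture logarithm $\log\hat Q_i$ contributes when $\hat Q_i$ is treated through its own quadratic term $x\log(x/p_i)\approx(x-p_i)+(x-p_i)^2/(2p_i)$ rather than linearly. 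So I would carry out the expansion of the $\hat Q_i$-part at second order, the $\hat p_i,\hat q_i$-parts at the first-order log level above, and collect everything. Then I would add and subtract $p_i$ and $\hat Q_i^2/p_i$ to arrive at the displayed form \eqref{eq:JSD_approx}, with the omitted remainder being $O_p(N^{-3/2})$ by the delta method.

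The main obstacle is precisely this bookkeeping of orders. A uniform second-order expansion of $x\log x$ yields the cleaner $\sum_i(\hat p_i-\hat q_i)^2/(4p_i)$. The statement's form differs from it by a term of the same order $\sum_i(\hat Q_i-p_i)^2/p_i$, which does not vanish asymptotically. So I must be explicit about which expansion convention is adopted for the mixture term and verify that the resulting quadratic form is the one whose distribution is later matched to the $\chi^2$ benchmark. I would check this convention against the PSI derivation of \cite{yurdakulNaranjo}, which expands around $p_i$ in the same way, and state the error term accordingly.
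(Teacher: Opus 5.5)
There is a genuine gap, and it is in the reconciliation step. After the $\log p_i$ pieces cancel, the summand is $\hat p_i\log(\hat p_i/p_i)+\hat q_i\log(\hat q_i/p_i)-2\hat Q_i\log(\hat Q_i/p_i)$. Carry out your hybrid literally: use $x\log(x/p_i)\approx(x-p_i)+(x-p_i)^2/p_i$ for $x=\hat p_i,\hat q_i$ (your first-order-log step), and $(x-p_i)+(x-p_i)^2/(2p_i)$ for $x=\hat Q_i$. The linear terms cancel and you get
\[
\frac{(\hat p_i-p_i)^2+(\hat q_i-p_i)^2-(\hat Q_i-p_i)^2}{p_i}
=\frac{(\hat p_i-\hat q_i)^2}{2p_i}+\frac{(\hat Q_i-p_i)^2}{p_i}.
\]
Upgrading the mixture term changes its quadratic part from $-2(\hat Q_i-p_i)^2/p_i$ to $-(\hat Q_i-p_i)^2/p_i$, so it \emph{adds} $(\hat Q_i-p_i)^2/p_i$. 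By your own (correct) identity, however, the statement equals $\frac12\sum_i[(\hat p_i-\hat q_i)^2/(2p_i)-(\hat Q_i-p_i)^2/p_i]$. The sign is wrong, and adding and subtracting $p_i$ and $\hat Q_i^2/p_i$ cannot repair it. No principled convention works either. With $\hat p_i,\hat q_i$ handled your way, you would need $\hat Q_i\log(\hat Q_i/p_i)\approx(\hat Q_i-p_i)+\frac{3}{2p_i}(\hat Q_i-p_i)^2$. The true quadratic coefficient is $1/(2p_i)$ for every argument, and your first-order-log step already doubles the $\hat p_i,\hat q_i$ quadratics. Expanding terms of the same size to different orders does not approximate JSD.

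The underlying problem, which you correctly sense at the end, is that the displayed expression is not a second-order approximation, so no argument with an $O_p(N^{-3/2})$ remainder can reach it. Under $q_i=p_i$, the consistent expansion gives $\mathrm{JSD}=\frac18\sum_i(\hat p_i-\hat q_i)^2/p_i+O_p(N^{-3/2})$. Using $\sum_i\hat Q_i=\sum_i p_i=1$, the stated right-hand side differs from $\frac18\sum_i(\hat p_i-\hat q_i)^2/p_i$ by exactly $-\frac12\sum_i(\hat p_i-p_i)(\hat q_i-p_i)/p_i$. For comparable $n$ and $m$, this term is of the same order as JSD itself. For instance, if $\hat p=\hat q\neq p$, then $\mathrm{JSD}=0$ while the right-hand side equals $-\frac12\sum_i(\hat p_i-p_i)^2/p_i<0$.

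The paper's appendix reaches the displayed form only through loose steps. Step 2 splits the cross term using $p_i=q_i$, so the paper, like you, really works under the null. It then replaces $(\hat p_i+\hat q_i)(\hat p_i\hat q_i+p_iq_i)/(4p_i^2)$ by its probability limit $p_i$. Step 3 discards $(\hat p_i-p_i)^2/(2p_i)$ and $(\hat q_i-q_i)^2/(2q_i)$ as $\mathcal{O}(1/n)$ and $\mathcal{O}(1/m)$, while keeping $(\hat p_i-\hat q_i)^2/(2p_i)$, which is of the same order. Its ``$\approx$'' is therefore correct only in the weak sense that both sides tend to $0$ in probability under the null.

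What Theorem~\ref{theorem:th1b} actually needs is the $\frac18$ Pearson form, and your uniform second-order expansion of $x\log x$ gives it directly. Prove that, and state plainly that the displayed form agrees with it only in this weak sense, rather than choosing expansion orders to manufacture it.
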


\begin{proof}
We expand $\log\hat{p}_i$, $\log\hat{q}_i$, and $\log\hat{Q}_i$ in Taylor series about
the population values $p_i$, $q_i$, and $Q_i=\tfrac{1}{2}(p_i+q_i)$, retaining terms
up to second order. Substituting these expansions into~\eqref{eq:JSD} and using
$(\hat{p}_i-p_i)^2=\mathcal{O}(1/n)$ and $(\hat{q}_i-q_i)^2=\mathcal{O}(1/m)$ to
discard higher-order remainders, followed by the asymptotic convergence
$\hat{p}_i\xrightarrow{p}p_i$ and $\hat{q}_i\xrightarrow{p}q_i$, yields~\eqref{eq:JSD_approx}.
Full algebraic details are given in Appendix~\ref{app:JSD}.
\end{proof}

\begin{theorem}[JSD under the null]\label{theorem:th1b}
Under the conditions of Theorem~\ref{theorem:th1}, if additionally $p_i=q_i$ for all $i$, then
\begin{equation}\label{eq:JSD_null}
  \mathrm{JSD}
  \;\approx\;
  \frac{1}{8}\sum_{i=1}^{B}\frac{(\hat{p}_i-\hat{q}_i)^{2}}{p_i}
  \;\sim\;
  \frac{1}{8}\,\chi^{2}_{B-1}\!\left(\frac{1}{m}+\frac{1}{n}\right).
\end{equation}
\end{theorem}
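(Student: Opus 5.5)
We prove the statement in two steps: an algebraic reduction of Theorem~\ref{theorem:th1} under $p_i=q_i$, then a distributional argument for the resulting quadratic form.

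\textbf{Step 1 (reduction).} Under the null we write $p_i=q_i=Q_i$. Expanding $\mathrm{JSD}$ to second order about the common point $p_i$ is cleaner than simplifying \eqref{eq:JSD_approx} directly, whose constant terms must cancel exactly. Write $a_i=\hat p_i-p_i$ and $b_i=\hat q_i-p_i$, so that $\hat Q_i-p_i=(a_i+b_i)/2$. The function
\begin{equation*}
f(x,y)=x\log x+y\log y-(x+y)\log\tfrac{x+y}{2}
\end{equation*}
satisfies $f(p,p)=0$ and $\nabla f(p,p)=0$. Its Hessian at $(p,p)$ equals $\frac{1}{2p}\begin{pmatrix}1&-1\\-1&1\end{pmatrix}$. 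Hence $f(\hat p_i,\hat q_i)\approx \frac{(a_i-b_i)^2}{4p_i}$, and therefore
\begin{equation*}
\mathrm{JSD}\approx\frac12\sum_i\frac{(\hat p_i-\hat q_i)^2}{4p_i}=\frac18\sum_i\frac{(\hat p_i-\hat q_i)^2}{p_i}.
\end{equation*}
This is the first relation in \eqref{eq:JSD_null}. As a consistency check, we should verify that \eqref{eq:JSD_approx} reproduces it after using $\sum_i\hat p_i=\sum_i\hat q_i=\sum_i p_i=1$. If the constant terms do not cancel there, we rely on the direct expansion above, citing the same $\mathcal O(1/n)+\mathcal O(1/m)$ remainder control used in the proof of Theorem~\ref{theorem:th1}.

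\textbf{Step 2 (distribution).} Set $D=\hat p-\hat q$. The two samples are independent, and by the multivariate CLT for multinomials, $\sqrt n(\hat p-p)\to N(0,\Sigma)$ and $\sqrt m(\hat q-p)\to N(0,\Sigma)$, where $\Sigma=\mathrm{diag}(p)-pp^\top$. It follows that $D/\sqrt{1/n+1/m}\to N(0,\Sigma)$. Let $Z_i=D_i/\sqrt{p_i(1/n+1/m)}$. Then $Z\to N(0,I-\sqrt p\sqrt p^{\top})$, with covariance a projection of rank $B-1$ because $\|\sqrt p\|=1$. Hence $\sum_i Z_i^2\to\chi^2_{B-1}$, which is the standard Pearson argument. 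Multiplying back gives
\begin{equation*}
\frac18\sum_i D_i^2/p_i\approx\frac18\left(\frac1n+\frac1m\right)\chi^2_{B-1},
\end{equation*}
which is the claim. We interpret $\chi^2_{B-1}(\cdot)$ in \eqref{eq:JSD_null} as a scaled chi-square. Slutsky's lemma then transfers the limit from the quadratic form to $\mathrm{JSD}$, since the Taylor remainder is $o_p(1/n+1/m)$ when $n,m\to\infty$ proportionally.

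\textbf{Main obstacle.} The delicate point is the remainder control. It requires $p_i>0$ for all bins, and third-order terms of order $O_p((1/n+1/m)^{3/2})$ must be shown negligible relative to the $(1/n+1/m)$ scale. We would handle this with a mean-value form of Taylor's theorem on the event $\{\hat p_i,\hat q_i\ge p_i/2\}$, whose probability tends to one.
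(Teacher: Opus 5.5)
Your argument is correct, but it reaches the result by a different route than the paper. The paper never expands about the common point. It substitutes $p_i=q_i$ into the general formula \eqref{eq:JSD_approx} of Theorem~\ref{theorem:th1} and rewrites each bracket as $\frac{(\hat p_i-\hat q_i)^2}{4p_i}-\frac{\hat p_i\hat q_i}{p_i}+p_i$. It then discards $p_i-\hat p_i\hat q_i/p_i$ on the grounds that $\hat p_i\hat q_i/p_i\to p_i$, and cites \cite{yurdakulNaranjo} for the $\chi^2_{B-1}$ limit of the Pearson-type form.

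Your version is self-contained. You expand $f$ per bin, using its zero value and zero gradient at $(p_i,p_i)$ and the Hessian, which gives $(a_i-b_i)^2/(4p_i)$. You derive the limit explicitly from the rank-$(B-1)$ projection, and you control the cubic remainder with a mean-value bound. This makes the proof rigorous at the order that matters.

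The paper's route is more economical, since it reuses Theorem~\ref{theorem:th1}, but it is not sound at that order. The discarded quantity sums to $-\tfrac12\sum_i a_ib_i/p_i$, because the linear parts cancel via $\sum_i(a_i+b_i)=0$. This is $O_p(1/\sqrt{nm})$, so for comparable $n$ and $m$ it is of the same order as the retained $\tfrac18\sum_i(a_i-b_i)^2/p_i$. The paper's final formula agrees with yours only because this term exactly offsets a second-order error in \eqref{eq:JSD_approx}.

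This bears directly on your proposed consistency check: it will fail, though not for the reason you flag. The constant and linear terms do cancel. The problem is the quadratic part: under the null, the right-hand side of \eqref{eq:JSD_approx} equals your quadratic form minus $\tfrac12\sum_i a_ib_i/p_i$. For instance, on the event $\hat p=\hat q\neq p$ it is strictly negative while $\mathrm{JSD}=0$. So do not try to reconcile the two expressions; let the direct expansion carry Step~1.

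A minor point: you do not need $n$ and $m$ to grow proportionally. The cubic remainder is $O_p\bigl((1/n+1/m)^{3/2}\bigr)=o_p(1/n+1/m)$ whenever both sample sizes diverge.
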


\begin{proof}
Setting $p_i=q_i$ in~\eqref{eq:JSD_approx} and simplifying shows that the terms
$-\hat{p}_i\hat{q}_i/p_i+p_i$ vanish asymptotically, leaving
$\mathrm{JSD}\approx\frac{1}{8}\sum_i(\hat{p}_i-\hat{q}_i)^2/p_i$.
By the multivariate CLT, this Pearson-type statistic follows a scaled chi-squared
distribution~\cite{yurdakulNaranjo}. The scaling factor $\tfrac{1}{8}$ carries through
directly. See Appendix~\ref{app:JSD_null} for the full derivation.
\end{proof}

\subsection{Quadratic Approximation of KLD}

\begin{theorem}[KLD under the null]\label{theorem:th2}
Under the conditions of Theorem~\ref{theorem:th1}, if $p_i=q_i$ for all $i$, then
\begin{equation}\label{eq:KLD_null}
  \mathrm{KLD}
  \;\approx\;
  \frac{1}{2}\sum_{i=1}^{B}\frac{(\hat{p}_i-\hat{q}_i)^{2}}{p_i}
  \;\sim\;
  \frac{1}{2}\,\chi^{2}_{B-1}\!\left(\frac{1}{m}+\frac{1}{n}\right).
\end{equation}
\end{theorem}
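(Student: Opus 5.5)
The plan is to do a second-order Taylor expansion of the log-ratio in \eqref{eq:KLD}. We expand around the common population value $p_i=q_i$, which is the natural centre under the null. Write $\hat{p}_i = p_i+\delta_i$ and $\hat{q}_i = p_i+\varepsilon_i$, where $\delta_i=\mathcal{O}_p(n^{-1/2})$ and $\varepsilon_i=\mathcal{O}_p(m^{-1/2})$. The constraint $\sum_i\hat p_i=\sum_i\hat q_i=1$ gives $\sum_i\delta_i=\sum_i\varepsilon_i=0$. Using $\log(1+x)=x-x^2/2+\mathcal{O}(x^3)$, expand
\begin{equation*}
\log\hat p_i-\log\hat q_i=\frac{\delta_i-\varepsilon_i}{p_i}-\frac{\delta_i^2-\varepsilon_i^2}{2p_i^2}+\mathcal{O}_p\bigl((n\wedge m)^{-3/2}\bigr).
\end{equation*}
Then multiply by $\hat p_i=p_i+\delta_i$ and sum over $i$, discarding all cubic terms. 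This is the same truncation used in the proof of Theorem~\ref{theorem:th1}.

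Next, collect the terms of $\mathrm{KLD}$ by order. The linear term is $\sum_i(\delta_i-\varepsilon_i)$, which vanishes identically. The quadratic terms are $\sum_i\delta_i(\delta_i-\varepsilon_i)/p_i$ and $-\sum_i(\delta_i^2-\varepsilon_i^2)/(2p_i)$. They combine to
\begin{equation*}
\sum_i\frac{\delta_i^2-2\delta_i\varepsilon_i+\varepsilon_i^2}{2p_i}=\frac12\sum_i\frac{(\hat p_i-\hat q_i)^2}{p_i},
\end{equation*}
which is the first relation in \eqref{eq:KLD_null}. The key point is the same cancellation that occurs for PSI in \cite{yurdakulNaranjo}. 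The asymmetry of KLD only shows up at third order, so under the null KLD is locally half of the symmetric Pearson-type form. This also matches \eqref{eq:JSD_null}, which has factor $1/8$ instead of $1/2$.

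For the distributional claim, we reuse the step in the proof of Theorem~\ref{theorem:th1b}. Since the two samples are independent, the multivariate CLT gives
\begin{equation*}
\bigl(\tfrac1n+\tfrac1m\bigr)^{-1/2}(\hat p-\hat q)\Rightarrow N\bigl(0,\;\mathrm{diag}(p)-pp^\top\bigr).
\end{equation*}
Whitening by $\mathrm{diag}(p)^{-1/2}$ gives a covariance $I-\sqrt p\sqrt p^\top$, which is an idempotent matrix of rank $B-1$. Therefore
\begin{equation*}
\bigl(\tfrac1n+\tfrac1m\bigr)^{-1}\sum_i(\hat p_i-\hat q_i)^2/p_i\Rightarrow\chi^2_{B-1},
\end{equation*}
and multiplying by $1/2$ yields \eqref{eq:KLD_null}.

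The main obstacle is making the truncation rigorous. One has to show that the cubic remainder is $o_p(1/n+1/m)$, so that it is negligible relative to the leading quadratic term. This needs $p_i>0$ for all bins, and it needs empty bins to occur with vanishing probability. Zero counts make $\log\hat q_i$ undefined, so I would treat the statement on the event that all $\hat q_i>0$. That event has probability tending to one. A Slutsky argument then completes the proof.
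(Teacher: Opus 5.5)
Your proof is correct and follows the paper's skeleton: a second-order Taylor expansion of $\log\hat p_i-\log\hat q_i$ about the common null value, the split $\hat p_i=p_i+(\hat p_i-p_i)$ in the prefactor, collecting the quadratic terms into $(\hat p_i-\hat q_i)^2/(2p_i)$, and then a Pearson-type $\chi^2_{B-1}$ limit.

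The real difference is how the first-order part is removed. The paper's appendix reaches the same intermediate expression $\sum_i\bigl[(\hat p_i-\hat q_i)^2/(2p_i)+(\hat p_i-p_i)-(\hat q_i-p_i)\bigr]$. It then drops the last two terms on the grounds that $\hat p_i\to p_i$ and $\hat q_i\to p_i$. That reasoning is not enough. Each of those summands is $\mathcal{O}_p(n^{-1/2}+m^{-1/2})$, which is of larger order than the retained quadratic term, itself $\mathcal{O}_p(1/n+1/m)$. If they did not cancel, they would dominate the statistic after rescaling by $(1/n+1/m)^{-1}$.

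Your observation that $\sum_i(\delta_i-\varepsilon_i)=0$ exactly, because $\sum_i\hat p_i=\sum_i\hat q_i=\sum_i p_i=1$, is the correct justification. Your expansion also shows that the KLD approximation error is cubic, $\mathcal{O}_p\bigl((n\wedge m)^{-3/2}\bigr)$. This is not the $\mathcal{O}(1/\sqrt n)+\mathcal{O}(1/\sqrt m)$ asserted in Section~\ref{sec:sample_size}, which appears to be an artifact of the paper's Step 3.

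The paper simply cites Yurdakul and Naranjo for the chi-square step. You instead derive it from the two-sample CLT and the rank-$(B-1)$ idempotent matrix $I-\sqrt p\sqrt p^{\top}$, and you make the $o_p(1/n+1/m)$ remainder, the Slutsky step, and the empty-bin event explicit.

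The obstacle you flag is routine once all $p_i>0$. On the event $|\delta_i|,|\varepsilon_i|\le p_i/2$ for all $i$, whose probability tends to one, the Lagrange remainder is bounded by a constant times $\sum_i(|\delta_i|^3+|\varepsilon_i|^3)$.
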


\begin{proof}
Taylor-expanding $\log\hat{p}_i$ and $\log\hat{q}_i$ to second order, substituting into
$\mathrm{KLD}=\sum_i\hat{p}_i(\log\hat{p}_i-\log\hat{q}_i)$, and setting $p_i=q_i$ collapses
the squared-error terms. Writing $\hat{p}_i=p_i+(\hat{p}_i-p_i)$ in the leading coefficient
then separates the dominant $(\hat{p}_i-\hat{q}_i)^2/(2p_i)$ term from remainders that
vanish as $n,m\to\infty$. The chi-squared result follows by the same Pearson argument
as Theorem~\ref{theorem:th1b}, with scaling factor $\tfrac{1}{2}$.
Full details are in Appendix~\ref{app:KLD}.
\end{proof}

\medskip
\noindent\textbf{Remark.}
Comparing~\eqref{eq:JSD_null} and~\eqref{eq:KLD_null} reveals that
$\mathrm{JSD}\approx\tfrac{1}{4}\,\mathrm{KLD}$ under the null hypothesis $p_i=q_i$,
a relationship that has practical implications for threshold selection when either
measure is used for population stability monitoring.

\subsection{Conditions for Validity of the Chi-Square Approximation}\label{sec:conditions}
The chi-square approximations derived in Theorems~1--3 rest on the asymptotic normality of multinomial proportions and the accuracy of the second-order Taylor expansion. In practice, these approximations are reliable only when certain regularity conditions are satisfied. This subsection makes these conditions explicit and provides guidelines for bin selection, particularly in credit risk applications.
 
\subsubsection{Minimum Expected Bin Count}\label{sec:bin_count}
The chi-square approximation requires that the normal approximation to the multinomial distribution be adequate in every bin. A necessary condition is that each bin contain a sufficient number of expected observations. Formally, for the base sample of size $n$ and target sample of size $m$, the approximation is reliable when
\begin{equation}\label{eq:min_expected}
    n \cdot p_i \geq 5 \quad \text{and} \quad m \cdot q_i \geq 5 \quad \text{for all } i = 1, \ldots, B.
\end{equation}
This is the classical rule for Pearson-type chi-square statistics. When this condition is violated---for example, when a bin has very few observations or when a probability $p_i$ is close to zero---the normal approximation to $\hat{p}_i$ breaks down, the Taylor expansion residuals are no longer negligible, and the chi-square reference distribution becomes unreliable.
 
In addition, the KLD and PSI measures require $\hat{q}_i > 0$ for all bins (due to the logarithmic terms), making them undefined when any target bin is empty. JSD is more robust in this regard because its mixture formulation $\hat{Q}_i = (\hat{p}_i + \hat{q}_i)/2$ ensures $\hat{Q}_i > 0$ whenever at least one of the two samples populates the bin. This is a practical advantage of JSD in applications where sparse bins are common.
 
A more conservative guideline, appropriate when distributions are skewed or when bins span the tails, is to require $n \cdot p_i \geq 10$ and $m \cdot q_i \geq 10$. When these thresholds cannot be met, practitioners should either merge adjacent bins to increase expected counts or use exact (permutation-based) tests instead of the chi-square critical values.
 
\subsubsection{Sample Size Requirements}\label{sec:sample_size}
Beyond the per-bin condition, the overall sample sizes $n$ and $m$ must be large enough for the quadratic (second-order) Taylor expansion to dominate the higher-order remainder terms. As shown in the derivations, the remainder for JSD is $\mathcal{O}(1/m) + \mathcal{O}(1/n)$ and for KLD is $\mathcal{O}(1/\sqrt{m}) + \mathcal{O}(1/\sqrt{n})$, indicating that KLD's approximation converges more slowly and requires larger samples for the same accuracy.
 
Our simulation results in Section~3 provide empirical guidance. The Type~I error analysis (Tables~3--5) shows that:
\begin{itemize}
    \item For $n, m \geq 400$, all three measures maintain rejection rates within $\pm 1\%$ of the nominal 5\% level, indicating that the chi-square approximation is dependable.
    \item For $n = m = 200$ (the smallest combination tested), PSI and KLD exhibit mild inflation of Type~I error (up to 6.1\% and 7.1\%, respectively), while JSD remains well-calibrated at approximately 4.3\%.
    \item Asymmetric sample sizes (e.g., $n = 200$, $m = 1000$) can inflate Type~I error for PSI and KLD even when the larger sample alone would be sufficient, because the approximation quality is governed by $\min(n, m)$.
\end{itemize}
 
As a practical rule of thumb, we recommend $\min(n, m) \geq 400$ when using the chi-square critical values derived in this paper. When $\min(n, m) < 400$, the JSD-based test is preferred over PSI or KLD due to its more conservative Type~I error behavior.
 
\subsubsection{Guidelines for Bin Selection in Credit Risk Applications}\label{sec:bin_selection}
The number of bins $B$ directly affects both the degrees of freedom of the chi-square distribution ($B - 1$) and the expected count per bin ($n/B$ under a uniform distribution). Choosing $B$ therefore involves a bias--variance trade-off: too few bins may mask genuine distributional shifts by aggregating heterogeneous regions, while too many bins may produce sparse counts that violate the minimum expected count condition and inflate Type~I error.
 
A practical starting point is to choose $B$ such that the minimum expected count per bin satisfies~\eqref{eq:min_expected}. Under a roughly uniform distribution, this requires $B \leq n/5$ (or $B \leq n/10$ under the more conservative threshold). For example, with $n = m = 1{,}000$, this suggests $B \leq 200$ at most, though values in the range $B \in [10, 20]$ are typical in practice and consistent with our simulation setup of $B = 10$.

Table~\ref{tab:conditions} summarizes the conditions under which the chi-square approximations can be safely applied.
 
\begin{table}[H]
\centering
\caption{Summary of Conditions for Valid Application of Chi-Square Approximations}
\begin{tabular}{p{4.2cm}p{7cm}}
\hline
\textbf{Condition} & \textbf{Requirement} \\ \hline
Minimum expected count & $n \cdot p_i \geq 5$ and $m \cdot q_i \geq 5$ for all bins (conservative: $\geq 10$) \\
No empty bins & Required for PSI and KLD; JSD tolerates one-sided empty bins via the mixture $\hat{Q}_i$ \\
Minimum sample size & $\min(n, m) \geq 400$ recommended; for $\min(n, m) < 400$, prefer JSD \\
Number of bins & $B \leq \min(n, m) / 5$; typically $B \in [10, 20]$ for credit risk \\
Bin construction & Use rating-grade--aligned or equal-frequency bins; avoid equal-width bins for skewed PD distributions \\
Robustness check & Report results for multiple values of $B$ \\ \hline
\end{tabular}
\label{tab:conditions}
\end{table}
 
When any of these conditions cannot be satisfied for instance, when monitoring a low-default portfolio where many rating grades contain fewer than five defaults, practitioners should supplement or replace the chi-square--based test with permutation tests or bootstrap-based confidence intervals for the divergence statistics, which do not rely on the asymptotic approximation.

\section{Numerical Experiments and Simulations}
A simulation is, in this section, run to demonstrate the convergence of the respective distributions of divergence measures to better understand how these measures behave based on different distributions. The consistency of these divergence measures is then compared and assessed by conducting statistical power analysis.
 
\subsection{Asymptotic Distribution Verifications by Simulations}
We will now conduct numerical experiments to test the validity of the approximate asymptotic distributions that were derived in the previous section. To do this, we generated 10,000 simulations of a pair of multinomial distributions. At each simulation a divergence measure is obtained. Once we had a large sample of these indicators, we compared the histogram to the proposed distribution, which is chi-square. The simulation uses $B = 10$ bins with $n = m = 500$, satisfying the conditions outlined in Section~\ref{sec:conditions}: each bin has an expected count of 50, well above the minimum threshold of 5.
 
Figure~\ref{fig:DivergenceSimHist} presents the empirical validation of the theoretical asymptotic distributions for three divergence measures. The results demonstrate agreement between empirical distributions from 10,000 Monte Carlo simulations and their corresponding theoretical $\chi^{2}$ distributions. In the right panel of the same figure, a quantile-quantile (Q-Q) plot show exceptional linear relationships with $R^2$ values exceeding 0.99 for all three measures (PSI: 0.9927, KLD: 0.9912, JSD: 0.9932). The empirical quantiles closely align with the theoretical $\chi^{2}(9)$ quantiles, confirming the validity of the asymptotic approximations. 
 
Table~\ref{tab:validation_results} summarizes the validation statistics. The empirical means closely approximate their theoretical counterparts: PSI (0.0183 vs. 0.0180), KLD (0.0091 vs. 0.0090), and JSD (0.0023 vs. 0.0023). Similarly, the empirical standard deviations align well with theoretical predictions. Kolmogorov-Smirnov tests yield p-values of 0.030, 0.076, and 0.116 for PSI, KLD, and JSD respectively, indicating no significant deviation from the $\chi^2$ distributions at conventional significance levels.
 
\begin{figure}[H]%
    \centering
    \includegraphics[width=1\textwidth]{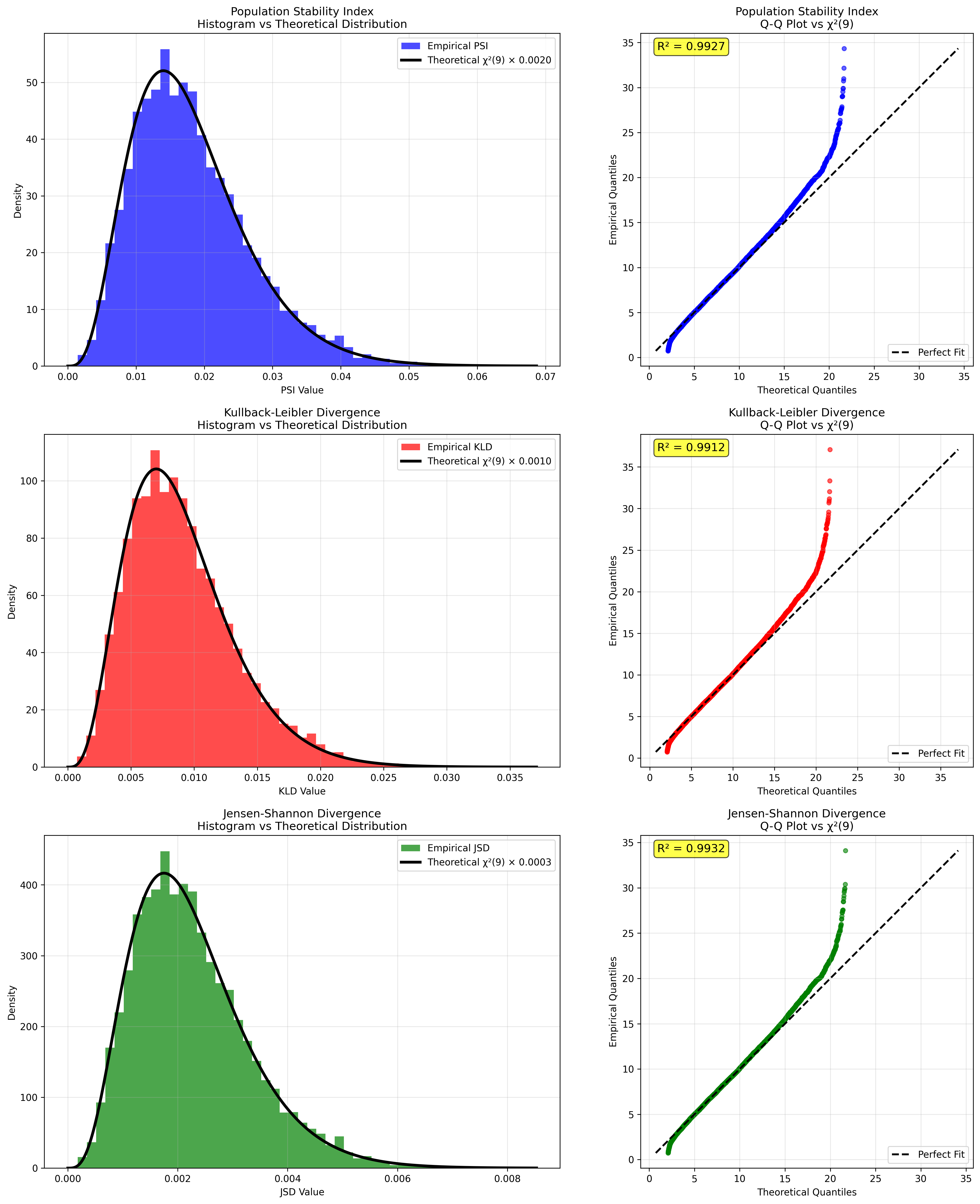}
    \captionsetup{justification=raggedright,singlelinecheck=false}
    \caption{Empirical Distributions of Divergence Indicators Histograms and Asymptotically Exact Densities}%
    \label{fig:DivergenceSimHist}%
\end{figure}
 
\begin{table}[ht]
\centering
 \caption{Asymptotic Distribution Validation Results}
\small
\setlength{\tabcolsep}{5pt}
\begin{tabular}{lcccccc}
\hline
Measure & \begin{tabular}[c]{@{}c@{}}Empirical\\Mean\end{tabular} & \begin{tabular}[c]{@{}c@{}}Theoretical\\Mean\end{tabular} & \begin{tabular}[c]{@{}c@{}}Empirical\\Std\end{tabular} & \begin{tabular}[c]{@{}c@{}}Theoretical\\Std\end{tabular} & \begin{tabular}[c]{@{}c@{}}KS\\Statistic\end{tabular} & \begin{tabular}[c]{@{}c@{}}KS\\p-value\end{tabular} \\ 
\hline
PSI & 0.0183 & 0.0180 & 0.0086 & 0.0085 & 0.0145 & 0.0301 \\
KLD & 0.0091 & 0.0090 & 0.0043 & 0.0042 & 0.0128 & 0.0758 \\
JSD & 0.0023 & 0.0023 & 0.0011 & 0.0011 & 0.0119 & 0.1158 \\
\hline
\end{tabular}
\label{tab:validation_results}
\end{table}
 
Thus, these results indicate that, as is shown in the equation~\eqref{eq:JSD_null} and \eqref{eq:KLD_null}, JSD and KLD follow $\frac{1}{8}\chi^{2}_{B-1}\left(\frac{1}{m}+\frac{1}{n}\right)$ and $\frac{1}{2}\chi^{2}_{B-1}\left(\frac{1}{m}+\frac{1}{n}\right)$, respectively. This is already proved in \cite{yurdakulNaranjo} for PSI, which has approximately $\chi^{2}_{B-1}\left(\frac{1}{m}+\frac{1}{n}\right)$.
 
Figure~\ref{fig:DivergenceSimQQ} displays the convergence of theoretical and empirical cumulative distribution functions derived from simulated data in the top panels. The supremum distances $\sup|F(x)-\hat{F}(x)|\sim 10^{-2}$ for each divergence measure demonstrate excellent convergence, making the theoretical $\chi^{2}$ and empirical CDFs virtually indistinguishable.
 
The bottom panels analyze the quantile ratios of JSD to PSI and KLD to PSI across all probability levels. These ratios provide direct empirical validation of the theoretical scaling relationships. The theoretical ratios emerge directly from these scaling factors:
\begin{align}
\frac{\text{JSD}}{\text{PSI}} &= \frac{\frac{1}{8} \times \chi^2_{B-1} \times \left(\frac{1}{m} + \frac{1}{n}\right)}{\chi^2_{B-1} \times \left(\frac{1}{m} + \frac{1}{n}\right)} = \frac{1}{8} \label{eq:jsd_psi_ratio} \\
\frac{\text{KLD}}{\text{PSI}} &= \frac{\frac{1}{2} \times \chi^2_{B-1} \times \left(\frac{1}{m} + \frac{1}{n}\right)}{\chi^2_{B-1} \times \left(\frac{1}{m} + \frac{1}{n}\right)} = \frac{1}{2} \label{eq:kld_psi_ratio}
\end{align}
 
\begin{figure}[H]%
   \centering
   \includegraphics[width=1\textwidth]{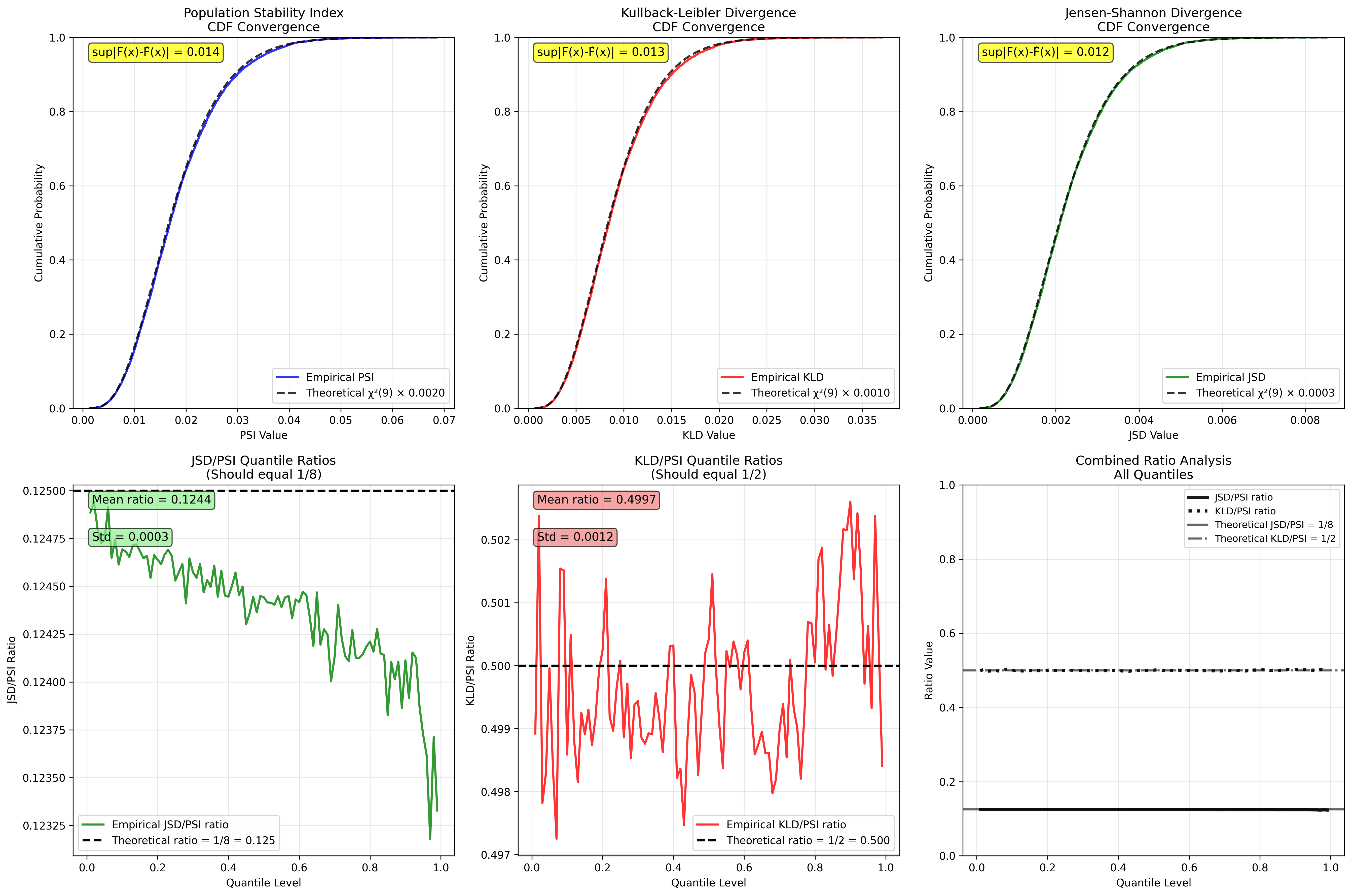}
   \captionsetup{justification=raggedright,singlelinecheck=false}
   \caption{Divergence Indicator Empirical vs Exact CDF and Quantile Ratios}%
   \label{fig:DivergenceSimQQ}%
\end{figure}
 
Figure~\ref{fig:DivergenceSimQQ} confirms these theoretical discussion empirically: the JSD/PSI ratio consistently approaches $\frac{1}{8}$ and the KLD/PSI ratio approaches $\frac{1}{2}$ across all quantiles, with minimal variability around the theoretical values. This quantile-level validation provides strong evidence that the asymptotic scaling relationships hold uniformly across the entire distribution, not just for central tendencies.
 
\subsection{Statistical Power Analysis}
This part includes simulation analyses to compare three divergence measures and investigate their statistical power. Conducting a rigorous statistical power analysis across diverse sample sizes and statistical moments is helpful for methodological precision and reliability in the assessment of distributional stability. The analysis aims to show the sensitivity of these divergence measures to varying degrees of distributional changes, offering insights into their robustness and effectiveness under different conditions. Evaluating statistical power across multiple moments allows for a nuanced understanding of the measures' performance in capturing diverse statistical characteristics.
 
To do this, we obtain $\chi^2$ benchmark values for each divergence measure at different sample sizes and moments. In that respect, the statistical power analysis is conducted to evaluate the efficacy of divergence measures in discerning distributional changes between two distributions over multiple trials.
 
Now, two normally distributed data are obtained and divergence measures are applied on the densities obtained through the histogram of these simulated distributions. These simulated data are manipulated via different mean and standard deviation parameters to systematically examine the sensitivity of divergence measures under diverse scenarios. In order to simulate our data 1,000 normal variates are generated with given $\mu$ and $\sigma$ parameters.
 
$\mathcal{N}\left(\mu_{1}=0,\sigma_{1}=1\right)$ and $\mathcal{N}\left(\mu_{2}=0.25,\sigma_{2}=1\right)$ then generated their histograms for bins of $\left(200, 400, 600, 800, 1000\right)$. 
 
Given this setup, we calculate test statistics for PSI, JSD, and KLD divergence measures and compare them with critical values coming from corresponding $\chi^2(k)$ distribution obtained in Table~\ref{table:diff_psi}, \ref{table:diff_kld}, and \ref{table:diff_jsd} below. The test hypothesis is given as follows: 
\begin{align}
    H_{0}&:\hat{p}^{1}_{i}=\hat{p}^{2}_{i} \nonumber\\
    H_{1}&:\hat{p}^{1}_{i}\neq\hat{p}^{2}_{i} \nonumber
\end{align}
where $\hat{p}^{1}$ is the predetermined distribution and $\hat{p}^{2}$ denotes the empirical distribution that we obtain from the data.
 
\subsubsection{Scenario A: Identical Distributions (Type I Error Control)}
When both distributions are identical ($\mu=0$, $\sigma=1$ for both), an ideal measure should reject $H_0$ approximately 5\% of the time at the $\alpha=0.05$ significance level.
 
\begin{table}[H]
\centering
\caption{Power Analysis of PSI Measure with Identical Distributions ($\mu=0$, $\sigma=1$)}
\begin{tabular}{cccccc}
\hline
n/m &    200  &    400  &    600  &    800  &    1000 \\ \hline
200 &  0.0558 &  0.0521 &  0.0587 &  0.0591 &  0.0643 \\
400 &  0.0489 &  0.0441 &  0.0419 &  0.0537 &  0.0455 \\
600 &  0.0415 &  0.0388 &  0.0425 &  0.0441 &  0.0473 \\
800 &  0.0388 &  0.0420 &  0.0386 &  0.0437 &  0.0419 \\
1000 & 0.0391 &  0.0397 &  0.0413 &  0.0416 &  0.0407 \\ \hline
\end{tabular}
\label{tab:PSI_null}
\end{table}
 
\begin{table}[H]
\centering
\caption{Power Analysis of KLD Measure with Identical Distributions ($\mu=0$, $\sigma=1$)}
\begin{tabular}{cccccc}
\hline
n/m &    200  &    400  &    600  &    800  &    1000 \\ \hline
200 &  0.0609 &  0.0572 &  0.0608 &  0.0647 &  0.0708 \\
400 &  0.0433 &  0.0404 &  0.0384 &  0.0491 &  0.0418 \\
600 &  0.0379 &  0.0355 &  0.0389 &  0.0404 &  0.0433 \\
800 &  0.0355 &  0.0384 &  0.0353 &  0.0400 &  0.0383 \\
1000 & 0.0357 &  0.0363 &  0.0378 &  0.0380 &  0.0372 \\ \hline
\end{tabular}
\label{tab:KLD_null}
\end{table}
 
\begin{table}[H]
\centering
\caption{Power Analysis of JSD Measure with Identical Distributions ($\mu=0$, $\sigma=1$)}
\begin{tabular}{cccccc}
\hline
n/m &    200  &    400  &    600  &    800  &    1000 \\ \hline
200 &  0.0434 &  0.0436 &  0.0421 &  0.0429 &  0.0430 \\
400 &  0.0448 &  0.0336 &  0.0386 &  0.0388 &  0.0362 \\
600 &  0.0445 &  0.0391 &  0.0357 &  0.0358 &  0.0365 \\
800 &  0.0456 &  0.0381 &  0.0342 &  0.0361 &  0.0350 \\
1000 & 0.0466 &  0.0391 &  0.0338 &  0.0353 &  0.0376 \\ \hline
\end{tabular}
\label{tab:JSD_null}
\end{table}
 
All three measures demonstrate appropriate Type I error control, with rejection rates moving around the nominal 5\% level across all sample size combinations. JSD exhibits the most conservative behavior, with rejection rates consistently at or slightly below 5\%, indicating it is least prone to false positives. KLD shows slightly higher variability in smaller samples (ranging from 3.5\% to 7.1\% for n,m $\leq$ 400) compared to PSI and JSD, though both stabilize as sample size increases. These results confirm that all measures maintain proper statistical calibration when distributions are truly identical, and empirically validate the sample size recommendations in Section~\ref{sec:sample_size}: for $\min(n, m) \geq 400$, all three measures stay within $\pm 1\%$ of the nominal level, while the $n = m = 200$ case shows that KLD's slower convergence rate ($\mathcal{O}(1/\sqrt{n})$ remainder versus $\mathcal{O}(1/n)$ for JSD) manifests as higher Type I error inflation in small samples.
 
\subsubsection{Scenario B: Shifted Distributions (Statistical Power)}
When distributions genuinely shifted ($\mu_1=0$ versus $\mu_2=0.25$, both with $\sigma=1$), higher rejection rates indicate greater statistical power, that is to say, the ability to correctly detect the distributional shift.
 
\begin{table}[H]
\centering
\caption{Power Analysis of PSI Measure with Shifted Distributions ($\mu_2=0.25$, $\sigma=1$)}
\begin{tabular}{cccccc}
\hline
n/m &    200  &    400  &    600  &    800  &    1000 \\ \hline
200 &  0.3229 & 0.4044 & 0.4744 & 0.4435 & 0.4953 \\
400 &  0.3951 & 0.5653 & 0.7329 & 0.6652 & 0.7605 \\
600 &  0.4405 & 0.6731 & 0.8472 & 0.7898 & 0.8822 \\
800 &  0.4619 & 0.7221 & 0.9113 & 0.8428 & 0.9401 \\
1000 & 0.4734 & 0.7720 & 0.9404 & 0.8795 & 0.9649 \\ \hline
\end{tabular}
\label{tab:PSI_power}
\end{table}
 
\begin{table}[H]
\centering
\caption{Power Analysis of KLD Measure with Shifted Distributions ($\mu_2=0.25$, $\sigma=1$)}
\begin{tabular}{cccccc}
\hline
n/m &    200  &    400  &    600  &    800  &    1000 \\ \hline
200 &  0.3182 & 0.4235 & 0.4935 & 0.4669 & 0.5110 \\
400 &  0.3837 & 0.5744 & 0.7315 & 0.6651 & 0.7787 \\
600 &  0.4213 & 0.6624 & 0.8525 & 0.7866 & 0.8875 \\
800 &  0.4504 & 0.7232 & 0.9069 & 0.8447 & 0.9357 \\
1000 & 0.4729 & 0.7633 & 0.9398 & 0.8823 & 0.9651 \\ \hline
\end{tabular}
\label{tab:KLD_power}
\end{table}
 
\begin{table}[H]
\centering
\caption{Power Analysis of JSD Measure with Shifted Distributions ($\mu_2=0.25$, $\sigma=1$)}
\begin{tabular}{cccccc}
\hline
n/m &    200  &    400  &    600  &    800  &    1000 \\ \hline
200 &  0.2725 & 0.3759 & 0.4533 & 0.4216 & 0.4752 \\
400 &  0.3694 & 0.5541 & 0.7227 & 0.6519 & 0.7520 \\
600 &  0.4014 & 0.6624 & 0.8476 & 0.7751 & 0.8783 \\
800 &  0.4428 & 0.7172 & 0.9015 & 0.8390 & 0.9380 \\
1000 & 0.4572 & 0.7577 & 0.9414 & 0.8748 & 0.9658 \\ \hline
\end{tabular}
\label{tab:JSD_power}
\end{table}
 
The power analysis reveals several important patterns:
\begin{itemize}
    \item Sample Size Effect: All measures show dramatic power increases with sample size. For example, PSI's power increases from 32\% (n=m=200) to 96\% (n=m=1000), demonstrating that larger samples are crucial for reliably detecting distributional shifts.
    
    \item Balanced Samples: Power is highest when both sample sizes are large and approximately equal. Asymmetric samples (e.g., n=200, m=1000) yield lower power than balanced large samples (n=m=1000).
\end{itemize}
 
PSI demonstrates the highest power, particularly in large samples (96.5\% at n=m=1000), while KLD shows nearly identical performance to PSI in large samples. JSD shows lower power in small-to-medium samples but converges to nearly identical power as PSI and KLD in large samples. Thus, JSD is more conservative in the sense that it requires larger samples to achieve the same power as PSI and KLD. However, it is important to note that this conservatism under the null hypothesis may partly reflect slower convergence of the empirical JSD to its asymptotic chi-square distribution, rather than an inherently superior calibration property. In small samples, the chi-square critical values may not fully capture the true finite-sample distribution of JSD, leading to under-rejection. Developing small-sample corrections or bootstrap-based calibration of the critical values where the null distribution is estimated via resampling rather than the asymptotic formula is a promising direction for future work, particularly for low-default portfolios where sample sizes are limited.
 
This conservative behavior makes JSD particularly valuable when the cost of false positives (incorrectly flagging distributions as different) is high. In model monitoring applications, for instance, JSD reduces unnecessary alerts while still detecting substantive distributional shifts when samples are sufficiently large.

\section{Application of Divergence to Credit Risk}
\subsection{Structural Credit Risk Models}
The integration of divergence measures emerges as a strong method to refine the precision and adaptability of credit risk models. In this part, using the structural credit risk models belonging to well-known stochastic asset processes, we aim to explore the application of divergence measures in detecting distributional changes for credit risk models. The specific focus is on the Merton model and its extensions, including Merton with Jump and Stochastic Volatility with Jump models. The reason behind using these processes is that there is an economic intuition behind them and it is possible to produce meaningful default probabilities without resorting to a pile of dataset. We note, however, that our empirical demonstration relies entirely on simulated asset paths from these structural models rather than actual market or default data. Validation on historical default databases (e.g., Moody's or S\&P default histories) or real-world portfolio monitoring data is needed to confirm the operational value of these divergence measures in production monitoring systems.
 
The motivation for incorporating divergence measures into credit risk modeling lies in their inherent capacity to capture distributional changes in the evolving dynamics of credit risk factors. The Merton model, a pioneering structural credit risk model, traditionally relies on the assumption of normally distributed asset values. However, financial markets often exhibit characteristics that deviate from this simplistic assumption, necessitating a more comprehensive modeling approach. The Merton with Jump model introduces jumps, accommodating abrupt changes in firm values. Meanwhile, the Stochastic Volatility with Jump model incorporates additional complexity through the inclusion of stochastic volatility, capturing the inherent variability in financial markets. Applying divergence measures on these models serve as diagnostic tools, enabling a systematic assessment of distributional shifts in the efficacy of credit risk models. Through this analysis, we will be able to uncover insights into the applicability and advantages of divergence measures in dynamic financial world.
 
Let us start by briefly explaining the structural credit risk models through which we obtain default probabilities. Our first task in deriving the stochastic volatility with the jump model using the Bates model \cite{tankov}.
\begin{equation}
A(T)=A(t)\exp{\left(\left(r-\frac{v(\tau)}{2} -\lambda \mu\right)\tau+\sigma \sqrt{v(\tau)}W(\tau)+\sum_{k=0}^{N(\tau)}\log(1+Y_{i})\right)}\nonumber
\end{equation}
where $\log(1+Y)\sim\mathcal{N}\left(\log(1+\mu_{J})-\frac{\sigma_{J}^{2}}{2},\sigma_{J}^{2}\right)$
\begin{align}
\label{eq:BatesChf}
C(\tau)={\tau \kappa \theta}
\frac{\left(\kappa-iu\rho_{x\nu}\sigma_v\right)}{\sigma_{v}^2}-\frac{2\kappa \theta}{\sigma_{v}^2} \ln\left( \frac{(\kappa-iu\rho_{x\nu}\sigma_v)}{\sqrt{\Delta}}\sinh\left(\frac{\sqrt{\Delta}\tau}{2}\right)+\cosh\left(\frac{\sqrt{\Delta}\tau}{2}\right)\right)\nonumber\\
\end{align}
where
\begin{equation}
\sqrt{\Delta}=\sqrt{\left(iu\rho_{x\nu}\theta -\kappa \right)^2+\sigma_{v}^2 \left(u^2+iu \right)}\nonumber
\end{equation}
\begin{equation}
B(\tau)=\frac{-(u^2+iu)}{\left(\kappa-iu\rho_{x\nu}\theta\right)+\sqrt{\Delta}\coth\left(\frac{\sqrt{\Delta}\tau}{2}\right)}\nonumber
\end{equation}
\begin{equation}
\varphi_{sv}\left(u,\tau,v(t),\sigma_{J},\mu_{J},\kappa,\rho,\theta \right)=e^{\left(C(\tau)+B(\tau)v(t)+iu\log A(t)+\lambda\tau\left[(1+\mu)^{ui}\exp{\left(\frac{-\sigma_{J}^{2}(u^2+ui)}{2}\right)}-1\right]\right)}\nonumber
\end{equation}
 
For a given liability level $L$, we can concisely derive stochastic volatility jump model PD as follows,
\begin{equation}
    P\left(A(T)\leq L\right)=\frac{1}{2}-\frac{1}{\pi}\int_{0}^{\infty}Re\left( \frac{e^{-iu\ln(L)} \varphi_{sv}\left(u,A(t),\tau,v(t),\sigma_{J},\mu_{J},\kappa,\rho,\theta \right)}{iu}du
\right)
\label{eq:BatesPD}
\end{equation}
The second model is Merton model with jump. Let us start with the formula of asset pricing process \cite{tankov}:
\begin{equation}
A(T)=A(t)\exp{\left(\left(r-\frac{\sigma^{2}}{2}-\lambda \mu\right)\tau+\sigma W(\tau)+\sum_{k=0}^{N(\tau)}y_{i}\right)}\nonumber.    
\end{equation}
where $y\sim \mathcal{N}\left(\mu_{j},\sigma_{j}\right)$.
Then the characteristic function,
\begin{equation}
   \varphi(u,\tau,A(t),\mu_{J},\sigma,\sigma_{J})= e^{iu\log\left(A(t)\right)+\tau\left[iu\left(r-\frac{\sigma^{2}}{2}\left(1-iu\right)-\lambda\mu_{J} \right)+\lambda \left(\exp\left(iu\mu_{J}-\frac{u^{2}\sigma_{J}^{2}}{2}\right) -1\right)\right]}.
   \label{eq:MertonJumpChf}
\end{equation}
Likewise, using characteristic function it is convenient to obtain the probability of default equation as follows:
\begin{equation}
    P\left(A(T)\leq L\right)=\frac{1}{2}-\frac{1}{\pi}\int_{0}^{\infty}Re\left( \frac{e^{-iu\log(L)} \varphi(u,\tau,A(t),\mu_{J},\sigma,\sigma_{J})}{iu}du
\right)
\label{eq:mertonJumpPD}
\end{equation}
 
Final model is baseline Merton model, using Geometric Brownian Motion, probability of default equation can be directly obtained from Black-Scholes and Merton model $\Phi(-d_{-})$ and it takes the following form:
\begin{equation}
P\left(A(T)\leq L\right)=\Phi\left(\frac{\log\left(\frac{L}{A(t)}\right)-(r-\frac{\sigma^{2}}{2})\tau}{\sigma\sqrt{\tau}} \right)
\label{eq:mertonPD}
\end{equation}
where in equations~\eqref{eq:BatesChf}, \eqref{eq:BatesPD}, \eqref{eq:MertonJumpChf}, \eqref{eq:mertonJumpPD} and \eqref{eq:mertonPD}, the parameters $\sigma, \lambda, \mu_{J}, \sigma_{J}$, L, and r correspond to volatility of asset, jump frequency, jump mean, jump volatility, risk-neutral interest rate and liability level, where applicable. Additionally, in equation~\eqref{eq:BatesChf} and \eqref{eq:BatesPD} the parameters $\sigma_{v}, \rho_{xv}, \theta, \text{and } \kappa$ correspond to volatility of the asset volatility, correlation between asset volatility and asset price, long-run average volatility and asset volatility adjustment parameter.
 
\subsection{Numerical Experiments for Credit Risk}
Having discussed the distributions of divergence measures and structural credit risk models, it is worthwhile to discuss the results of the credit risk models that are being tested with these measures. As expected, Figure~\ref{fig:PDmodels} shows that the probability of default estimated by the Merton model with jump and SVJ is higher than that of the regular Merton model. However, over time, the models exhibit convergent behavior in their trajectories, even though the Merton model with jump initially overshoots SVJ. Despite this visual convergence in time series patterns, statistical divergence tests reveal that the distributional characteristics of the estimated default probabilities remain significantly different across models.
 
\begin{figure}[H]
\centering
\includegraphics[width=\linewidth]{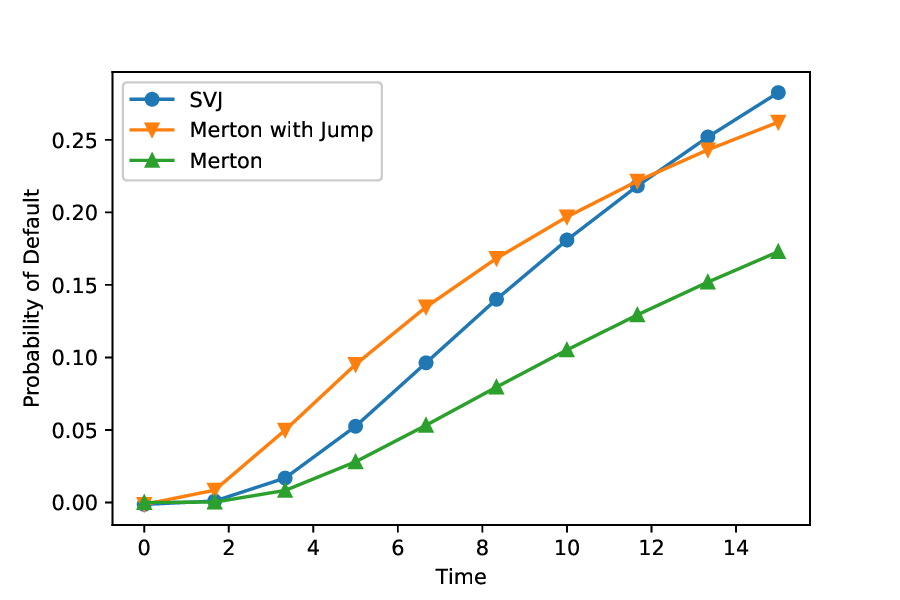}
\captionsetup{justification=raggedright,singlelinecheck=false}
\caption{Default Models with respect to Time}
\label{fig:PDmodels}
\end{figure}
 
The divergence measures mentioned in this study will now be applied to the results of the credit risk models, and the results will be assessed. Before delving into the statistical analysis, Figure~\ref{fig:PSIgraph} provides valuable information. Figure~\ref{fig:PSIgraph} shows that all the credit risk models diverge from each other to some degree. The first panel (left) compares the Merton model with jump and the SVJ model, showing slight divergence. The second panel (middle) compares the Merton model and the SVJ model, also exhibiting slight divergence similar to the first panel. However, the third panel (right), which compares the Merton model with jump and the standard Merton model, clearly exhibits divergence.
 
\begin{figure}[H]
    \centering
    \includegraphics[width=\linewidth]{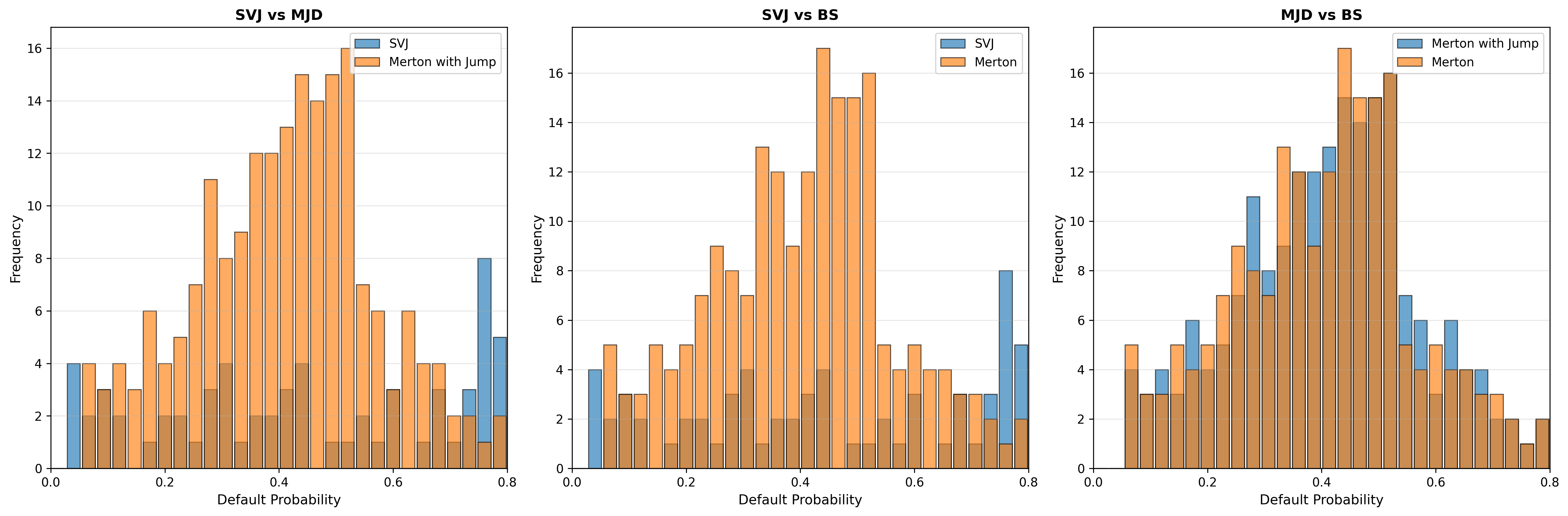}
    \captionsetup{justification=raggedright,singlelinecheck=false}
    \caption{Probability of Default Densities}
    \label{fig:PSIgraph}
\end{figure}
 
Visual inspection of Figure~\ref{fig:PSIgraph} suggests distributional differences, particularly evident in the lower panels comparing Merton-SVJ and Merton with Jump-Merton. However, visual assessment alone cannot quantify the statistical significance or magnitude of these differences. To rigorously evaluate whether observed divergences exceed what might occur by random variation, we apply formal hypothesis testing using the three divergence measures.
 
Thus, critical values for all these divergence measures are derived at 5\% significance level and reported at the bottom line of Table~\ref{table:diff_psi}, \ref{table:diff_kld}, and \ref{table:diff_jsd}. These values are 0.0991, 0.0495, and 0.0123 for PSI, KLD, and JSD, respectively. In this application, we use $B = 10$ equal-frequency bins constructed from the base (Merton) distribution. This choice satisfies the validity conditions of Section~\ref{sec:conditions}: with the simulated portfolio of 1,000 obligors, each bin contains approximately 100 observations, well above the minimum expected count of 5. Following the recommendation in Section~\ref{sec:bin_selection}, practitioners monitoring PD distributions should consider using rating-grade--aligned bins when a master scale is available, or equal-frequency bins as employed here when the goal is to detect shifts across the entire PD range.
 
As we showed in Section~\ref{sec:2}, divergence measures considered follow weighted or scaled $\chi^2$ distributions. Thus, using $\chi^2_{B-1}\left(\frac{1}{m}+\frac{1}{n} \right)$ critical value of 0.0991 for PSI, it is possible to see if a distribution is the same with the predetermined distribution. In other words, the test is applied to see whether there is any difference between distributions. 
 
The results of these analyses are shown in Table~\ref{table:diff_psi}, \ref{table:diff_kld}, and \ref{table:diff_jsd}. According to the results, the largest difference in default probabilities is observed in the PSI measure. Additionally, Table~\ref{table:diff_psi} reveals differences in the probability of default distribution among the models. The $\chi^2$ benchmark for the PSI measure is 0.0991 and the differences in probability of default are higher than that, suggesting that the distribution of these results is not the same. It is important to note that the $\chi^2$ critical value is obtained at a 5\% significance level for all of the analyses. This finding is also supported by Figure~\ref{fig:PSIgraph}, which shows a divergence between the last two plots.

\begin{table}[h!]
\caption{Convergence of PD Models Based on PSI}
\centering
\label{table:diff_psi}
\begin{tabular}{ccc}
\hline
PD Models & Difference & Statistics \\ \hline
Merton with Jump - SVJ & 0.2809 & \\
Merton - SVJ & 0.7420&\\
Merton with Jump - Merton & 1.9419&\\
$\chi^2_{B-1}\left(\frac{1}{m}+\frac{1}{n} \right)$ critical value & & 0.0991\\ \hline
\end{tabular}%
\end{table}
 
\begin{table}[H]
\caption{Convergence of PD Models Based on KLD}
\centering
\label{table:diff_kld}
\begin{tabular}{ccc}\hline
PD Models & Difference & Statistics \\ \hline
Merton with Jump - SVJ & 0.1465 & \\ 
Merton - SVJ & 0.3497&\\
Merton with Jump - Merton & 0.9843&\\
$\frac{1}{2}\chi^2_{B-1}\left(\frac{1}{m}+\frac{1}{n} \right)$ critical value & & 0.0495\\ \hline
\end{tabular}%
\end{table}
 
\begin{table}[H]
\caption{Convergence of PD Models Based on JSD}
\centering
\begin{tabular}{ccc}
\hline
 PD Models & Difference & Statistics \\ \hline
Merton with Jump - SVJ & 0.0310& \\
Merton - SVJ & 0.0841&\\
Merton with Jump - Merton & 0.1944&\\
$\frac{1}{8}\chi^2_{B-1}\left(\frac{1}{m}+\frac{1}{n} \right)$ critical value & & 0.0123\\ \hline
\end{tabular}%
\label{table:diff_jsd}
\end{table}
 
These different magnitudes across measures, PSI ranging from 0.28 to 1.94, KLD from 0.15 to 0.98, and JSD from 0.03 to 0.19, reflect the inherent scaling differences in how each measure quantifies divergence. PSI uses logarithmic differences without symmetrization, KLD measures information gain from one distribution to another, while JSD symmetrizes and bounds the divergence. Critically, what matters is not the absolute magnitude but whether each value exceeds its respective critical threshold, which all measures consistently achieve in our analysis.
 
The result obtained from KLD measure is similar to that of PSI and JSD measures. Numerically, difference in probability of default estimation between Merton model with Jump and SVJ is 0.1465 whereas difference is 0.2809 in PSI approach. The difference is even larger in the case of Merton - SVJ models (0.3497) and Merton with Jump - Merton (0.9843) models. Using KLD measure, we ended up with rejecting the null hypothesis and concluded that the probability of default distributions are different.
 
Now, it is worth discussing the result obtained from JSD measure. Similarly, our test based on JSD measure shows that the default probability distributions deviate from each other, as all divergence values exceed the $\chi^2$ benchmark value of 0.0123. Notably, JSD produces smaller absolute divergence values (ranging from 0.0310 to 0.1944) compared to PSI (0.2809 to 1.9419) and KLD (0.1465 to 0.9843). However, this does not indicate different conclusions—all three measures unanimously reject the null hypothesis of distributional equality. The smaller magnitude reflects JSD's mathematical scaling properties rather than conservatism in decision-making.
 
The key insight is that while the absolute divergence values differ across measures due to their distinct mathematical formulations, all three measures provide consistent statistical conclusions: the credit risk models produce significantly different probability of default distributions. This consistency validates our earlier power analysis findings that, when distributional differences are substantial, all measures converge to similar detection capabilities.
 
\section{Conclusion}
In today's volatile world, it is needless to say that distributions of the data used in a model should be closely monitored. Otherwise, the model becomes useless or misleading over time. Given the complexity and intrication of the financial world, monitoring is even more important. This study is an attempt to provide a solution to this issue by deriving statistical properties of JSD and KLD measures and empirical findings obtained from PSI, KLD, and JSD.
 
Based on $\chi^2$ benchmark values, it is endeavoured to detect distributional difference of probability of default derived from credit risk models. The result suggests that JSD is more conservative in detecting distributional difference as the $\chi^2$ benchmark value is the lowest. However, KLD and PSI tend to detect distributional difference even if the difference is not much pronounced. Besides, statistical power analysis is employed the consistency of the analysis over different sample sizes. Given different mean and standard deviation values, this analysis also confirms that the JSD is more inclined to reject dissimilarity. However, the differences across the divergence measures disappears with large sample size.
 
Based on benchmark values for $\chi^2$, we endeavor to detect differences in the distribution of probability of default from credit risk models. The results suggest that JSD is more conservative in detecting these differences, as its critical value is the lowest. However, KLD and PSI tend to detect distributional difference even if the difference is not much pronounced. Additionally, a statistical power analysis is used to ensure consistency of the analysis across different sample sizes. This analysis also confirms that JSD is more likely to reject dissimilarity, due to variations in mean and standard deviation values. However, with larger sample sizes, the differences among the divergence measures disappear.
 
The divergence indicators that are studied here could be a tool to quantify model risk. We have also established explicit conditions, minimum expected bin counts, sample size requirements, and bin construction guidelines, under which the chi-square approximations can be safely applied in practice, particularly for credit risk model monitoring where PD distributions are typically skewed and low-default portfolios may produce sparse bins. Therefore, further research should be conducted to investigate the distribution implied by other divergence indicators. Additionally, the approximate distributions derived from these measures could be used to quantify model risk and be utilized in pricing or as an add-on for capital calculations. Two further directions merit attention: first, the development of small-sample corrections or bootstrap-based calibration of the chi-square critical values, which would improve the reliability of all three measures and JSD in particular when sample sizes are limited; second, validation of the proposed framework on historical default databases and real-world portfolio monitoring data, which would confirm the operational value of these divergence measures beyond the simulated setting studied here.

\bibliographystyle{plainnat}
\bibliography{bmc_article}

\begin{appendix}
\section{Appendix}
\subsection{Detailed derivation of Theorem~\ref{theorem:th1}}\label{app:JSD}

We provide the full algebraic details omitted from the proof sketch. The Taylor expansions are:
\begin{equation*}
\begin{aligned}
\log(\hat{p}_{i})&=\log(p_{i})+\frac{(\hat{p}_{i}-p_{i})}{p_{i}}-\frac{1}{2}\frac{(\hat{p}_{i}-p_{i})^2}{p_{i}^2}+\frac{1}{3}\frac{(\hat{p}_{i}-p_{i})^3}{p_{i}^3}+\mathcal{O}\!\left(n^{-4}\right),\\
\log(\hat{q}_{i})&=\log(q_{i})+\frac{(\hat{q}_{i}-q_{i})}{q_{i}}-\frac{1}{2}\frac{(\hat{q}_{i}-q_{i})^2}{q_{i}^2}+\frac{1}{3}\frac{(\hat{q}_{i}-q_{i})^3}{q_{i}^3}+\mathcal{O}\!\left(m^{-4}\right),\\
\log(\hat{Q}_{i})&=\log(Q_{i})+\frac{(\hat{Q}_{i}-Q_{i})}{Q_{i}}-\frac{1}{2}\frac{(\hat{Q}_{i}-Q_{i})^2}{Q_{i}^2}+\frac{1}{3}\frac{(\hat{Q}_{i}-Q_{i})^3}{Q_{i}^3}+\mathcal{O}\!\left(n^{-4}\right)+\mathcal{O}\!\left(m^{-4}\right).
\end{aligned}
\end{equation*}

Substituting into the JSD definition and retaining quadratic terms, we obtain:
\begin{equation}\label{eq:JSDtaylor_app}
\begin{aligned}
\text{JSD}&\approx\sum_{i=1}^{B}\frac{1}{2}\Bigg[\frac{(\hat{p}_{i}-\hat{q}_{i})^2}{2p_{i}}-\hat{p}_{i}\Bigg(\frac{(\hat{p}_{i}-p_{i})^{2}}{2p_{i}^2}+ \frac{(\hat{p}_{i}-p_{i})^{2}}{8p_{i}^2}+\frac{(\hat{q}_{i}-q_{i})^{2}}{8q_{i}^2}+\frac{(\hat{p}_{i}-p_{i})(\hat{q}_{i}-q_{i})}{4p_{i}^2}\Bigg)\\
&\quad+\hat{q}_{i}\Bigg(\frac{(\hat{q}_{i}-q_{i})^{2}}{2q_{i}^2}+\frac{(\hat{p}_{i}-p_{i})^{2}}{8p_{i}^2}+\frac{(\hat{q}_{i}-q_{i})^{2}}{8q_{i}^2}+\frac{(\hat{p}_{i}-p_{i})(\hat{q}_{i}-q_{i})}{4q_{i}^2}\Bigg)\Bigg].
\end{aligned}
\end{equation}

\paragraph{Step 1: Apply $\hat{p}_i = p_i + (\hat{p}_i - p_i)$ and $\hat{q}_i = q_i + (\hat{q}_i - q_i)$.} This separates the coefficients before each parenthesized group into a population-level part and a fluctuation part. Since $(\hat{p}_i - p_i)^2 = \mathcal{O}(1/n)$ (because $\sqrt{n}(\hat{p}_i - p_i) \xrightarrow{d} \mathcal{N}(0,\sigma^2)$), any product of a fluctuation with a squared fluctuation is $\mathcal{O}(n^{-3/2})$ and can be absorbed into the remainder. After this step, equation~\eqref{eq:JSDtaylor_app} simplifies to:
\begin{equation}\label{eq:JSD_adj_app}
\begin{aligned}
\text{JSD}&\approx\sum_{i=1}^{B}\frac{1}{2}\Bigg[\frac{(\hat{p}_{i}-\hat{q}_{i})^2}{2p_{i}}+p_{i}\cdot\frac{(\hat{p}_{i}-p_{i})^{2}}{2p_{i}^2}+q_{i}\cdot\frac{(\hat{q}_{i}-q_{i})^{2}}{2q_{i}^2}\\
&\quad+(\hat{p}_{i}+\hat{q}_{i) }
\left(\frac{\hat{p}_{i}\hat{q}_{i}-\hat{p}_{i}q_{i}-\hat{q}_{i}p_{i}+p_{i}q_{i}}{4p_{i}^2}\right) \Bigg]+\text{h.o.t.}
\end{aligned}
\end{equation}
where h.o.t.\ denotes higher-order terms that vanish faster than $\mathcal{O}(1/n) + \mathcal{O}(1/m)$.

\paragraph{Step 2: Simplify the cross-term.} The last term in~\eqref{eq:JSD_adj_app} can be decomposed as:
\begin{equation}\label{eq:part2_app}
(\hat{p}_{i}+\hat{q}_{i})
\left(\frac{\hat{p}_{i}\hat{q}_{i}-\hat{p}_{i}q_{i}-\hat{q}_{i}p_{i}+p_{i}q_{i}}{4p_{i}^2}\right)=-\frac{(\hat{p}_{i}+\hat{q}_{i})^2}{4p_{i}}+
(\hat{p}_{i}+\hat{q}_{i})\left(\frac{\hat{p}_{i}\hat{q}_{i}+p_{i}q_{i}}{4p_{i}^2}\right).
\end{equation}
The identity behind the first component is:
\[
-\frac{(\hat{p}_{i}+\hat{q}_{i})^2}{4p_{i}}=-\hat{p}_{i}\left(\frac{\hat{p}_{i}+\hat{q}_{i}}{4p_{i}}\right)-\hat{q}_{i}\left(\frac{\hat{p}_{i}+\hat{q}_{i}}{4p_{i}}\right).
\]
For the second component, we use the asymptotic convergence $\hat{p}_i \to p_i$ and $\hat{q}_i \to q_i$:
\[
(\hat{p}_{i}+\hat{q}_{i})\left(\frac{\hat{p}_{i}\hat{q}_{i}+p_{i}q_{i}}{4p_{i}^2}\right) \longrightarrow (p_i + q_i)\cdot\frac{2p_iq_i}{4p_i^2} = \frac{p_i+q_i}{2} = p_i,
\]
where the last equality uses $p_i = q_i$ under the null.

\paragraph{Step 3: Combine.} Substituting back and noting that the terms involving $(\hat{p}_i - p_i)^2/(2p_i)$ and $(\hat{q}_i - q_i)^2/(2q_i)$ are $\mathcal{O}(1/n)$ and $\mathcal{O}(1/m)$ respectively, we arrive at:
\begin{equation*}
\text{JSD}\approx\frac{1}{2}\sum_{i=1}^{B}\left[\frac{(\hat{p}_{i}-\hat{q}_{i})^2}{2p_{i}}-\frac{(\hat{p}_{i}+\hat{q}_{i})^2}{4p_{i}}+p_{i}\right]. \qedhere
\end{equation*}

\subsection{Detailed derivation of Theorem~\ref{theorem:th1b} (JSD under the null)}\label{app:JSD_null}
Setting $p_i = q_i$ in the result of Theorem~\ref{theorem:th1}:
\begin{align}
\text{JSD}&\approx\frac{1}{2}\sum_{i=1}^{B}\left[\frac{(\hat{p}_{i}-\hat{q}_{i})^2}{2p_{i}}-\frac{(\hat{p}_{i}+\hat{q}_{i})^2}{4p_{i}}+p_{i}\right]\nonumber\\
&=\frac{1}{2}\sum_{i=1}^{B}\left[\frac{\hat{p}_{i}^2-2\hat{p}_{i}\hat{q}_{i}+\hat{q}_{i}^2}{2p_{i}}-\frac{\hat{p}_i^2+2\hat{p}_i\hat{q}_i+\hat{q}_i^2}{4p_{i}}+p_{i}\right]\nonumber\\
&=\frac{1}{2}\sum_{i=1}^{B}\left[\frac{\hat{p}_{i}^2-2\hat{p}_{i}\hat{q}_{i}+\hat{q}_{i}^2}{4p_{i}}-\frac{\hat{p}_{i}\hat{q}_{i}}{p_{i}}+p_{i}\right].\label{eq:jsd_null_expand}
\end{align}
Since $\hat{p}_i \to p_i$ and $\hat{q}_i \to q_i = p_i$, we have $\hat{p}_i\hat{q}_i / p_i \to p_i$ asymptotically, so $-\hat{p}_i\hat{q}_i/p_i + p_i \to 0$. Therefore:
\begin{equation*}
\text{JSD}\approx\frac{1}{2}\sum_{i=1}^{B}\frac{(\hat{p}_{i}-\hat{q}_{i})^2}{4p_{i}} = \frac{1}{8}\sum_{i=1}^{B}\frac{(\hat{p}_{i}-\hat{q}_{i})^2}{p_{i}}.
\end{equation*}
By the multivariate CLT, $(\hat{p}_1,\ldots,\hat{p}_B)$ is asymptotically multivariate normal. It is well known \cite{yurdakulNaranjo} that the Pearson-type statistic $\sum_{i=1}^{B}\frac{(\hat{p}_i-\hat{q}_i)^2}{p_i}$ follows a $\chi^2_{B-1}\!\left(\frac{1}{m}+\frac{1}{n}\right)$ distribution. Since our result is $\frac{1}{8}$ times this quantity:
\begin{equation*}
\text{JSD}\sim\frac{1}{8}\chi^{2}_{B-1}\!\left(\frac{1}{m}+\frac{1}{n}\right). \qedhere
\end{equation*}

\subsection{Detailed derivation of Theorem~\ref{theorem:th2}}\label{app:KLD}
Starting from the Taylor expansions of $\log(\hat{p}_i)$ and $\log(\hat{q}_i)$ (as in Appendix~\ref{app:JSD}), we substitute into $\text{KLD} = \sum_i \hat{p}_i(\log\hat{p}_i - \log\hat{q}_i)$. Retaining terms up to second order:
\begin{equation}\label{eq:KLdiv_app}
\text{KLD}\approx\sum_{i=1}^{B}\hat{p}_{i}\left[\frac{(\hat{p}_{i}-\hat{q}_{i})}{p_{i}}-\frac{(\hat{p}_{i}-p_{i})^2}{2p_{i}^2}+\frac{(\hat{q}_{i}-q_{i})^2}{2q_{i}^2}\right]+\mathcal{O}(m^{-3/2})+\mathcal{O}(n^{-3/2}).
\end{equation}

\paragraph{Step 1: Assume $p_i = q_i$ and combine the squared terms.} Under this assumption, $p_i = q_i$ simplifies the last two terms:
\begin{equation*}
-\frac{(\hat{p}_{i}-p_{i})^2}{2p_{i}^2}+\frac{(\hat{q}_{i}-q_{i})^2}{2p_{i}^2} = \frac{(\hat{p}_{i}-p_{i}+\hat{q}_{i}-p_{i})(\hat{q}_{i}-\hat{p}_{i})}{2p_i^2} + \text{cross terms}.
\end{equation*}

\paragraph{Step 2: Decompose $\hat{p}_i = p_i + (\hat{p}_i - p_i)$ in the leading term.} Writing $\hat{p}_i = p_i + (\hat{p}_i - p_i)$ in the coefficient gives:
\begin{align*}
\text{KLD}&\approx\sum_{i=1}^{B}\Bigg[\frac{(\hat{p}_{i}-p_{i})(\hat{p}_{i}-\hat{q}_{i})}{p_{i}}+(\hat{p}_{i}-\hat{q}_{i})\\
&\quad +\frac{(\hat{p}_i - p_i)(\hat{q}_i - p_i)}{2p_i^2}\left[(\hat{q}_i - p_i) - (\hat{p}_i - p_i)\right]\\
&\quad -\frac{1}{2p_i}\left(\hat{p}_i - p_i + \hat{q}_i - p_i\right)(\hat{p}_i - \hat{q}_i)\Bigg].
\end{align*}

\paragraph{Step 3: Simplify.} After algebraic rearrangement (collecting the $(\hat{p}_i - \hat{q}_i)$ terms):
\begin{equation*}
\text{KLD}\approx\sum_{i=1}^{B}\left[\frac{(\hat{p}_{i}-\hat{q}_{i})^{2}}{2p_{i}}+(\hat{p}_{i}-p_{i})-(\hat{q}_{i}-p_{i})\right].
\end{equation*}
Since $\hat{p}_i \to p_i$ and $\hat{q}_i \to p_i$ asymptotically, the last two terms vanish, yielding:
\begin{equation*}
\text{KLD}\approx\frac{1}{2}\sum_{i=1}^{B}\frac{(\hat{p}_{i}-\hat{q}_{i})^2}{p_{i}}.
\end{equation*}

The distributional result follows by the same chi-squared argument as for JSD (see Appendix~\ref{app:JSD_null}), noting that our expression is $\frac{1}{2}$ times the Pearson quadratic form:
\begin{equation*}
\text{KLD}\sim\frac{1}{2}\chi^{2}_{B-1}\!\left(\frac{1}{m}+\frac{1}{n}\right). \qedhere
\end{equation*}
\end{appendix}

\end{document}